\newtheorem{theorem}{Theorem}
\newtheorem{lemma}[theorem]{Lemma}
\def\<{\langle}
\def\>{\rangle}
\def\N{\mathbb{N}}
\def\Q{\mathbb{Q}}
\def\R{\mathbb{R}}
\def\eps{\varepsilon}
\newcommand{\comment}[1]{}
\newcommand{\fid}{\operatorname{F}}
\newcommand{\ket}[1]{\ensuremath{\left|#1\right\rangle}}
\title{\bf\LARGE Characteristics of Universal Embezzling Families}
\author{
  Debbie Leung\footnote{%
    Institute for Quantum Computing and
    Department of Combinatorics and Optimization,
    University of Waterloo,
    Waterloo, Ontario, Canada.}
  \and
  Bingjie Wang\footnote{
  	University of Cambridge,
  	Cambridge, Cambridgeshire, United Kingdom.
  }
}
\date{6 July, 2014}  
\begin{document}
\maketitle

\begin{abstract}
We derive properties of general universal embezzling families for
bipartite embezzlement protocols, where any pure state can be
converted to any other without communication, but in the presence of
the embezzling family. Using this framework, we exhibit various
families inequivalent to that proposed by van Dam and Hayden.
We suggest a possible improvement and present detail numerical 
analysis.
\end{abstract}

\section{Introduction}
\label{primer}

We begin by defining bipartite {\it quantum state embezzlement}
between Alice and Bob.  Let $\ket{\varphi}$ and $\ket{\mu}$ be
bipartite quantum states; embezzlement of $\ket{\varphi}$ from
$\ket{\mu}$ is the transformation $\ket{\mu} \mapsto
\ket{\mu}\ket{\varphi}$ using only local operations. Operationally, 
Alice and Bob
share $\ket{\mu}$ and, without further communication, ``embezzle'' a
shared $\ket{\varphi}$.  

Pure bipartite entangled states, their interconversions, and their
applications in quantum information processing tasks have been
well-studied.  Axiomatically, entanglement, as a quantum correlation,
does not increase without communication, rendering exact embezzlement
impossible for a general $\ket{\varphi}$.  Surprisingly, van Dam and
Hayden \cite{DH02} showed embezzlement can be approximated, with
arbitrary precision, as the dimension of $\ket{\mu}$ grows.
Furthermore, arbitrary $\ket{\varphi}$ can be embezzled from the same
$\ket{\mu}$.  We call such a sequence of states \ket{\mu(n)} a {\it
  universal embezzling family}.


Embezzlement has found interesting applications.  It 
enables remote parties to share an arbitrary state on
demand without communication (see for example \cite{V-private}).
Furthermore, embezzlement hides the existence or the disappearance of
a quantum state from any external observer.  Thus embezzlement is used
in the noisy channel simulation in the original \cite{QRSTa} and an
alternative \cite{QRSTb} proof of the quantum reverse Shannon theorem.
Finally, in \cite{LTW08}, embezzlement motivates a game for which no
finite amount of entanglement suffices in an optimal strategy, and
provides proofs that some natural classes of quantum operations are
not topologically closed.

The results in \cite{DH02} have been extended in several ways.  An
alternative embezzling family for any number of parties is proposed in
\cite{LTW08}.  This family also achieves better approximation for a
given dimension of $|\mu\>$ for non-universal embezzlement (a method
attributed to \cite{HS-private}). Reference \cite{V-private} provides
an embezzlement protocol that is robust against discrepancy
between the descriptions of $\ket{\varphi}$ available to Alice and Bob. 


There are many unresolved questions concerning embezzlement.  In the
multiparty setting, the only known universal multiparty embezzlement
family is an $\epsilon$-net of the non-universal embezzlement states
\cite{LTW08}; perhaps more efficient universal families exist.  In the
bipartite setting, the family in \cite{DH02} is not known to be
optimal, but it has been elusive to find an optimality proof or a
better family.  Likewise, there may be a lower dimensional resource
state for the robust protocol in \cite{V-private}.  Very few universal
embezzling families are known, and finite size effect or the
computational complexity of embezzlement is hardly studied.

In this paper, we focus on the bipartite setting.  We derive
conditions for universal embezzlement, and exhibit a countably
infinite number of inequivalent families.  We conjecture a universal
embezzling family based on our findings, and provide numerical
evidence for the improvement in efficiency.

During the preparation of this manuscript, we learned of the result by
Dinur, Steurer, and Vidick reported in \cite{V-private}, and another
on-going study of embezzlement by Haagerup, Scholz, and Werner
\cite{HSW-private}.

\subsection*{Canonical Form for Embezzlement}
Any pure bipartite state has a Schmidt decomposition (see for example
\cite{NC00}).  Since the parties can perform local unitary operations, 
without loss
of generality, $\ket{\mu} = \sum_{i=1}^{ \tilde{n}} \mu_i |i\>_{A_1}
|i\>_{B_1}$ and $\ket{\varphi} = \sum_{j=1}^{m} \varphi_j |j\>_{A_2}
|j\>_{B_2}$ where $\{\ket{i}\}_{i=1}^{\tilde{n}}$,
$\{|j\>\}_{j=1}^{m}$ are orthonormal bases for Alice's systems $A_1,
A_2$, and for Bob's systems $B_1, B_2$.  Furthermore, $\mu_i$,
$\varphi_i$ can be chosen non-negative and decreasing, with $\sum
\mu_i^2 = 1$ and $\sum \varphi_i^2 = 1$ so \ket{\mu} and \ket{\varphi}
are normalized.  We refer to $\tilde{n}$ as the Schmidt rank and the
$\mu_i$s as Schmidt coefficients of \ket{\mu}; the same terminology
holds for the Schmidt decomposition of any bipartite state.

In this canonical form, there is an exchange symmetry between Alice
and Bob.  Furthermore, any quantum operation can be implemented as an
isometry, $U$, with possibly larger output space. In embezzlement, the
actual output state is $U \otimes U \ket{\mu}$.

\subsection*{Measure of success and optimal strategy}
One measure of the precision of the embezzlement protocol is the
fidelity.  The fidelity between two pure states 
\ket{\varphi} and \ket{\psi} is given by 
$F(\ket{\varphi},\ket{\psi}) = |\<\varphi|\psi\>|$ 
(see \cite{NC00}).  From
\cite{Vidal-Jonathan-Nielsen-00}, it follows that the fidelity between
the output $U \otimes U \ket{\mu}$ and the target $\ket{\mu}
\ket{\varphi}$ is optimized by the isometry $U$ taking $A_1 \mapsto A_1A_2$
(likewise for Bob) that simply permutes the basis states, such that
$\ket{\omega} {:}= U^\dagger \otimes U^\dagger \ket{\mu}
\ket{\varphi}$ has decreasing Schmidt coefficients.
The optimal fidelity is $\max_U \<\mu| \<\varphi| \, (U \otimes U
\ket{\mu}) = \<\omega| \, (|\mu\> \otimes |1\>|1\>)$.
The state $|\mu\> \otimes |1\>|1\>$ has 
Schmidt coefficients $\mu_i$ followed by zeros.  
We denote it by the equivalent state $|\mu\>$ throughout. 

In this paper, we only consider embezzlement protocols that involve
permutation of the basis states.  We often consider ``optimal
embezzlement'' as described above.  Given a universal embezzling
family, we focus on a subsequence \ket{\mu(n)} indexed by the local
dimension $n = \tilde{n}$.

Intuitively, a state $|\mu\>$ is useful for universal embezzlement if its
Schmidt coefficients $\mu_i's$ has high fidelity with respect to 
$\{\mu_i \varphi_j\}_{ij}$ for any valid $\{\varphi_j\}$.

\subsection*{General vs regular embezzling families}
The most general embezzling family has the form 
%
%
\[
\ket{\mu(n)} = \sum_{i=1}^n \mu(i, n)\ket{i}\ket{i}
\]
where for each $n$, $\mu(i, n)$ is decreasing with $i$ and 
$\sum_{i=1}^n \mu(i, n)^2 = 1$.  
An interesting special case concerns embezzling families whose Schmidt
coefficients are generated by decreasing functions of one variable
$i$, $f : \N \mapsto \R^+$.
They are given by 
\[
\ket{\mu(f,n)} = \frac{1}{\sqrt{C(f,n)}}\sum_{i=1}^n f(i)\ket{i}\ket{i} \,.
\]
where $C(f, n) = \sum_{i=1}^n f(i)^2$ so \ket{\mu(f,n)} is normalized. 
We call these universal embezzling families ``regular''.  
They are a direct generalization of 
the universal embezzling family proposed in \cite{DH02}:
\[
\ket{\mu(f_{dh}, n)} = \frac{1}{\sqrt{C(f_{dh},n)}}
			\sum_{i=1}^{n} \frac{1}{\sqrt{i}}\ket{i}\ket{i}
\]
where $f_{dh}(x) = 1/\sqrt{x}$.

\section{Properties of Embezzling Families}
\label{general}
In this section, we present necessary conditions and sufficient
conditions for a sequence, \ket{\mu(n)}, to be a universal embezzling
family.

First, for universal embezzlement, it suffices to be able to embezzle
any Schmidt rank 2 state. We first introduce a lemma stating that
embezzlement of different Schmidt rank $m$ states can be done in
superposition.  This result is a simple generalization of both  
embezzlement and coherent state exchange \cite{LTW08}.

\begin{lemma} \label{embinsuperposition}
Suppose it is possible to embezzle any \ket{\varphi} with Schmidt rank $m$ 
using $|\mu\>$ with fidelity at least $F$ (see Section \ref{primer}), 
then the following transformation
\[
\sum_{j=1}^{k} \alpha_j |\mu\> |jj\> \rightarrow 
\sum_{j=1}^k \alpha_j |\mu\> |\varphi_j\>
\]
can be performed with fidelity at least $F$ without communication, for 
any $\alpha_j$'s satifying $\sum_{j=1}^k |\alpha_j|^2 = 1$ and for each 
$|\varphi_j\>$ of the form 
\[
 |\varphi_j\> = \sum_{l=1}^m \varphi_{j,l} |m(j{-}1){+}l\>|m(j{-}1){+}l\> \, 
 ~\text{with}~ \sum_{l=1}^m |\varphi_{j,l}|^2 = 1 
\,. 
\]
\end{lemma}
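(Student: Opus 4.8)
The plan is to realize the claimed superposed transformation as a single \emph{controlled} (``coherent'') embezzlement, assembled from the individual embezzling isometries guaranteed by the hypothesis, and then to check that coherence across the branches does not degrade the fidelity below $F$. First I would reduce each branch to the standing assumption: for fixed $j$, the state $\ket{\varphi_j}$ is, up to the local basis permutation $S_j:\ket{l}\mapsto\ket{m(j{-}1){+}l}$ that shifts indices into the $j$-th block, an ordinary Schmidt-rank-$m$ state. Hence by hypothesis there is a local isometry $\tilde W_j$ (the same for Alice and Bob, by the exchange symmetry of the canonical form) embezzling $\sum_l \varphi_{j,l}\ket{ll}$ from $\ket{\mu}$ with fidelity at least $F$, and $W_j := S_j\tilde W_j$ embezzles $\ket{\varphi_j}$ into block $j$ with the same fidelity. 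Multiplying $W_j$ by an overall phase, I may assume the overlap $F_j := \bra{\mu}\bra{\varphi_j}(W_j\otimes W_j)\ket{\mu}$ is real and $\ge F$.

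Next I would assemble these into a controlled isometry acting on the control register together with the $\ket{\mu}$ register, $V = \sum_{j=1}^k \ket{j}\bra{j}\otimes W_j$ for Alice, and the identical $V$ for Bob. Because the projectors $\ket{j}\bra{j}$ make $V$ block diagonal and each $W_j$ is an isometry, $V^\dagger V = \sum_j \ket{j}\bra{j}\otimes I = I$, so $V$ is a genuine local isometry. Applying $V\otimes V$ to the input $\sum_j\alpha_j\ket{jj}\ket{\mu}$ yields $\sum_j\alpha_j\ket{jj}(W_j\otimes W_j)\ket{\mu}$, which I will compare with the coherent target $\sum_j\alpha_j\ket{jj}\ket{\mu}\ket{\varphi_j}$ (the control register carried along for now).

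The key computation is the overlap of these two states, and here the control register does the work: since $\bracket{j'j'}{jj}=\delta_{jj'}$, every cross term vanishes and the overlap collapses to $\sum_j|\alpha_j|^2 F_j$, a convex combination of the branch fidelities, which is at least $F\sum_j|\alpha_j|^2 = F$. This is precisely the point of embezzling ``in superposition'': orthogonality of the branches (enforced by the control register, and also by the disjoint blocks, which make the $\ket{\varphi_j}$ mutually orthogonal and the stated target normalized) prevents the branch fidelities from multiplying, so the guarantee stays at $F$ rather than degrading.

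Finally I would discard the control register. Because the $\ket{\varphi_j}$ live in disjoint blocks, the label $j$ in $\ket{jj}$ is redundantly encoded in the output, and appending or deleting such a perfectly correlated register is a local isometry $R\otimes R$; it maps the coherent target $\sum_j\alpha_j\ket{jj}\ket{\mu}\ket{\varphi_j}$ exactly onto the stated target $\sum_j\alpha_j\ket{\mu}\ket{\varphi_j}$. Since fidelity is invariant under isometries, precomposing $V\otimes V$ with $R\otimes R$ realizes the lemma's transformation with fidelity still at least $F$. I expect the main obstacle to be exactly this bookkeeping of the control register: one cannot simply ``erase'' it conditioned on the embezzled output, since the approximate output need not lie precisely in block $j$, so the clean route is to keep the register coherent throughout, bound the fidelity with it present, and only then remove it by an exact local isometry justified by the disjointness of the blocks.
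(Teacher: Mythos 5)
Your proposal is correct and takes essentially the same approach as the paper: the controlled isometry $V=\sum_j \ket{j}\bra{j}\otimes W_j$ followed by stripping the (now block-redundant) label is exactly the paper's $U=\sum_j \widetilde{U}_j$ with $\widetilde{U}_j\ket{\xi}\ket{j'}=0$ for $j'\neq j$, and both arguments reduce the overlap to the convex combination $\sum_j|\alpha_j|^2 F_j\geq F$ via orthogonality of the branches. Your explicit bookkeeping of why cross terms vanish and why the label removal is an exact isometry on the relevant subspace is a slightly more detailed rendering of a point the paper leaves implicit.
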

\begin{proof} 
The given embezzlement property, as specified in Section \ref{primer}, 
implies that 
$\forall j, \exists U_j$ such that 
$F(U_j \otimes U_j |\mu\> |11\>, |\mu\> \sum_{l=1}^{m} \varphi_{j,l} |ll\>) 
\geq F$.  
Modifying the input and output bases gives a $\widetilde{U}_j$ such
that $F(\widetilde{U}_j \otimes \widetilde{U}_j |\mu\> |jj\>, |\mu\>
|\varphi_j\>) \geq F$.  Further define $\widetilde{U}_j |\xi\> |j'\> =
0$ for all $|\xi\>$ whenever $j' \neq j$.  So, $U = \sum_j
\widetilde{U}_j$ is an isometry satisfying: 
\[
\left[ \sum_{j'=1}^k \alpha_{j'}^* \<\mu| \<\varphi_{j'}| \right] 
        \left[ U \otimes U \sum_{j=1}^{k} \alpha_j |\mu\> |jj\> \right]
= 
\left[ \sum_{j'=1}^k \alpha_{j'}^* \<\mu| \<\varphi_{j'}| \right] 
        \left[ \sum_{j=1}^k \alpha_j \widetilde{U}_j \otimes 
          \widetilde{U}_j |\mu\> |jj\> \right]
\geq F  \,.
\]
\end{proof}

We now analyze embezzlement of general states by recursively
embezzling Schmidt rank $2$ states while reusing the embezzlement
state.  To do so, we use two facts concerning the {\it trace distance}
between two density matrices $\sigma_{1,2}$ of equal dimension,
defined as $T(\sigma_1, \sigma_2) {:}{=} \, \frac{1}{2} \| \sigma_1 -
\sigma_2 \|_1$ where $\|\cdot\|_1$ denotes the Schatten $1$-norm.  
First, for two pure states,
$T(\ket{\sigma_1},\ket{\sigma_2})^2 + 
F(\ket{\sigma_1}, \ket{\sigma_2})^2 = 1$.  Second, the trace
distance is nonincreasing under any quantum operation and is subadditive. 
(See \cite{Ruskai94,FuchsvdG99,NC00} for detail.)   
In particular, if $F(|\sigma\>, U|\sigma_1\>) \geq F_1$ and 
$F(|\sigma_1\>,|\sigma_2\>) \geq F_2$, then, 
\begin{equation}
\label{subaddtrdist}
\sqrt{1-F(|\sigma\>,U|\sigma_2\>)^2} = T(|\sigma\>, U|\sigma_2\>) \leq
T(|\sigma\>, U|\sigma_1\>) + T(|\sigma_1\>, |\sigma_2\>)
\leq \sqrt{1-F_1^2} + \sqrt{1-F_2^2} \,, 
\end{equation}
which bounds the performance of substituting $|\sigma_1\>$ by $|\sigma_2\>$
in any operation $U$.  

\begin{lemma} \label{only2qubits}
Suppose it is possible to embezzle any Schmidt rank $2$ state from
$|\mu\>$ with fidelity at least $F$. Then,
embezzlement of any Schmidt rank $m$ state $|\varphi\>$ can be achieved 
with fidelity at least $F_m$ where $1-F_m^2 \leq \lceil \log_2 m \rceil^2 
(1-F^2)$.
\end{lemma}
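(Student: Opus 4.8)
The plan is to realise an arbitrary Schmidt rank $m$ target as the output of a sequence of $d := \lceil \log_2 m\rceil$ rounds, each of which only embezzles Schmidt rank $2$ states, and then to accumulate the per-round errors linearly in trace distance. Picture a balanced binary tree of depth $d$ whose $2^d \geq m$ leaves are identified with the basis labels $1,\dots,m$ of the target $\ket{\varphi} = \sum_{j=1}^m \varphi_j \ket{jj}$, with any unused leaves carrying amplitude $0$. Assign to each internal node the total squared amplitude of the leaves below it; the two children of a node then inherit a rank $2$ conditional state whose coefficients are the square roots of the two subtree masses, normalised by the parent mass. By construction the product of conditional amplitudes along each root-to-leaf path reproduces $\varphi_j$, so growing the tree level by level from the single root branch $\ket{\mu}\ket{11}$ produces exactly $\ket{\mu}\ket{\varphi}$ in the ideal, error-free case.

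Each level is implemented by a single isometry using Lemma~\ref{embinsuperposition} with $m=2$. At the start of round $t$ the ideal state is a superposition $\sum_J \alpha_J \ket{\mu}\ket{JJ}$ over the current branches $J$, occupying disjoint index ranges; applying the lemma splits every branch simultaneously into its two children,
\[
\sum_J \alpha_J \ket{\mu}\ket{JJ} \longrightarrow \sum_J \alpha_J \ket{\mu}\ket{\varphi_J}\,,
\]
with overall fidelity at least $F$ against the intended level-$t$ state. The crucial gain is that Lemma~\ref{embinsuperposition} charges the single embezzlement fidelity $F$ once for the whole superposition rather than once per branch; in trace distance this is a cost of at most $\sqrt{1-F^2}$ per level, independent of how many branches are split. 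A branch whose split would place a zero coefficient on one child is a trivial rank $1$ embezzlement costing nothing, so non-powers of $2$ and the padding leaves are handled for free.

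To combine the rounds, let $\ket{\Psi_t}$ be the actual state after applying the first $t$ isometries $U_1,\dots,U_t$ to $\ket{\mu}\ket{11}$, and let $\ket{\psi_t}$ be the corresponding ideal level-$t$ state, so $\ket{\psi_0} = \ket{\mu}\ket{11}$ and $\ket{\psi_d} = \ket{\mu}\ket{\varphi}$. Using that an isometry does not increase trace distance, together with the single-round guarantee $F(U_t\ket{\psi_{t-1}}, \ket{\psi_t}) \geq F$, the triangle inequality exactly as in~(\ref{subaddtrdist}) gives
\[
T(\ket{\Psi_t},\ket{\psi_t}) \leq T(\ket{\Psi_{t-1}},\ket{\psi_{t-1}}) + \sqrt{1-F^2}\,,
\]
so by induction $T(\ket{\Psi_d},\ket{\psi_d}) \leq d\sqrt{1-F^2}$. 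Finally, the pure-state identity $T^2 + F^2 = 1$ converts this back into $1 - F_m^2 = T(\ket{\Psi_d},\ket{\mu}\ket{\varphi})^2 \leq d^2(1-F^2) = \lceil \log_2 m\rceil^2 (1-F^2)$, as claimed.

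The main obstacle I expect is not the error accounting, which is a routine telescoping of~(\ref{subaddtrdist}), but making the recursion precise enough that each level is genuinely an instance of Lemma~\ref{embinsuperposition}. One must check that the children branches can be assigned pairwise disjoint basis index ranges, so that the $\ket{\varphi_J}$ are mutually orthogonal in the form the lemma demands; that the conditional rank $2$ amplitudes at every node are well defined and multiply along each path to the prescribed $\varphi_j$; and that reusing the single copy of $\ket{\mu}$ across all $d$ rounds is consistent with the statement of the lemma. Once the tree decomposition is set up carefully, the stated bound follows immediately.
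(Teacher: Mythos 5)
Your proposal is correct and takes essentially the same approach as the paper: the paper's induction on $l$ with $m=2^l$, which pairs adjacent coefficients via $\alpha_j^2 = \varphi_{2j-1}^2 + \varphi_{2j}^2$ and splits all branches in one application of Lemma~\ref{embinsuperposition}, is exactly your binary tree of subtree masses unrolled level by level, with the per-level cost $\sqrt{1-F^2}$ accumulated through the trace-distance triangle inequality of Eq.~(\ref{subaddtrdist}) just as in your telescoping step. The only difference is presentational --- induction versus explicit iteration over the $\lceil \log_2 m \rceil$ levels.
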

\begin{proof} 

It suffices to prove the theorem for $m=2^l$ for $l \in \N$ via
induction on $l$. The base case $l=1$ is given. Assume, for some $k$,
for any state \ket{\phi} with Schmidt rank at most $2^k$, there exists
an isometry $V$, such that $F_k = F(V \otimes
V\ket{\mu},\ket{\mu}\ket{\phi})$ satisfies $1 - F_k^2 \leq k^2 (1 -
F^2)$.

It remains to show that any $\ket{\varphi} = \sum_{i=1}^m \varphi_i 
\ket{i}\ket{i}$ with $m = 2^{k+1}$ can be embezzled with the desired 
fidelity.
To do so, let $\alpha_j^2 = \varphi_{2j-1}^2 + \varphi_{2j}^2$
and $|\varphi_j\> = \alpha_j^{-1} (\varphi_{2j{-}1}|2j{-}1\>|2j{-}1\> + 
\varphi_{2j}|2j\>|2j\>)$ for $j=1, 2, 3, \cdots, 2^k$. Apply the induction 
hypothesis; so $\ket{\phi} = \sum_{j=1}^{m/2} \alpha_j \ket{jj}$ 
can be embezzled 
with fidelity at least $F_k$ with some isometry $V$. In addition, from Lemma
\ref{embinsuperposition}, $|\mu\>\ket{\phi} \rightarrow |\mu\> 
(\sum_{j=1}^{m/2} \alpha_j |\varphi_j\>) = |\mu\>|\varphi\>$ can be performed 
with fidelity at least $F$. Finally, using Eq.~(\ref{subaddtrdist}), 
we evaluate the 
fidelity of composing these two steps by taking $|\sigma\> = |\mu\>|\varphi\>$,
$|\sigma_1\> = |\mu\> \ket{\phi}$, and $\ket{\sigma_2} = V \otimes V |\mu\>$.
This yields $\sqrt{1-F_{k+1}^2} \leq 
\sqrt{1-F^2} + \sqrt{1-F_k^2} \leq  
\sqrt{1-F^2} + \sqrt{k^2(1 - F^2)}   
=(k + 1)\sqrt{1 - F^2}.$
\end{proof}

{\it Remark.}  Due to Lemma \ref{only2qubits}, we take $\ket{\varphi}
= \alpha\ket{00} + \beta\ket{11}$ unless otherwise stated. In
\ket{\omega}, the Schmidt coefficients either have the form $\alpha
\mu(i, n)$ or $\beta \mu(i, n)$ which we will refer to as $\alpha$ and
$\beta$ terms respectively.

Our next observation implies the divergence of the normalization
factor $C(f,n)$ for regular embezzling families.

\begin{lemma}\label{divergentcn}
If \ket{\mu(n)} is a universal embezzling family, then 
$\mu(1,n) \rightarrow 0$ as $n \rightarrow \infty$.  In particular, 
for regular universal embezzling families, $C(f, n)\rightarrow 
\infty$ as $n \rightarrow \infty$. 
\end{lemma}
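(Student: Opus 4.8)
The plan is to read off the optimal embezzlement fidelity from the canonical form in Section \ref{primer}, where it equals $\langle\omega|\mu\rangle$, and then to observe that a single Schmidt coefficient of the target already pins down $\mu(1,n)$. First I would fix one convenient Schmidt rank $2$ target, say the maximally entangled $\ket{\varphi}=\frac{1}{\sqrt2}\bigl(\ket{00}+\ket{11}\bigr)$, whose largest Schmidt coefficient is $\varphi_1=1/\sqrt2<1$. The target state $\ket{\mu(n)}\ket{\varphi}$ has Schmidt coefficients $\{\mu(i,n)\varphi_j\}$, and since $\mu(1,n)$ and $\varphi_1$ are the largest available factors, its single largest Schmidt coefficient is $\mu(1,n)\varphi_1$. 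Writing $\omega(n)$ for the decreasing rearrangement of these products, the top entry is $\omega_1(n)=\mu(1,n)\varphi_1$, while the top entry of $\ket{\mu(n)}$ (padded with zeros, as in Section \ref{primer}) is simply $\mu(1,n)$.

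Next I would convert the overlap formula into a lower bound on the error. Since the optimal fidelity is $F_n=\langle\omega(n)|\mu(n)\rangle=\sum_k\omega_k(n)\,\mu(k,n)$, with both coefficient vectors normalized and sorted in decreasing order, the squared Euclidean distance of the two vectors is $\sum_k(\omega_k(n)-\mu(k,n))^2=2(1-F_n)$, which dominates its first term:
\[
2(1-F_n)\ \ge\ \bigl(\omega_1(n)-\mu(1,n)\bigr)^2\ =\ \mu(1,n)^2\,(1-\varphi_1)^2 .
\]
Hence $1-F_n\ge \tfrac12\,\mu(1,n)^2\,(1-1/\sqrt2)^2$. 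Because $\ket{\mu(n)}$ is a universal embezzling family, $F_n\to1$ for this (indeed for any fixed) target, so the right-hand side must vanish, forcing $\mu(1,n)\to0$. This proves the first claim.

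For the regular case I would just specialize: a regular family has $\mu(i,n)=f(i)/\sqrt{C(f,n)}$, so its largest coefficient is $\mu(1,n)=f(1)/\sqrt{C(f,n)}$. Since $f(1)>0$ is a fixed constant (as $f$ maps into $\R^+$), the already-established $\mu(1,n)\to0$ is equivalent to $C(f,n)\to\infty$, which is exactly the ``in particular'' statement.

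I expect the only delicate point to be justifying that the optimal fidelity genuinely equals the coordinatewise inner product $\sum_k\omega_k(n)\,\mu(k,n)$ of the two decreasingly-sorted Schmidt vectors, so that their first coordinates are really matched in the bound above. This is precisely the content of the canonical-form discussion of Section \ref{primer} together with the optimality result of \cite{Vidal-Jonathan-Nielsen-00}, which I would invoke rather than reprove; everything else in the argument is elementary.
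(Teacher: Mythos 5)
Your proposal is correct and follows essentially the same route as the paper: the same test state $(\ket{00}+\ket{11})/\sqrt{2}$, the same identity $\sum_k(\omega_k-\mu_k)^2 = 2(1-F_n)$ bounded below by its first term with $\omega_1 = \mu(1,n)/\sqrt{2}$, and the same specialization $\mu(1,n)=f(1)/\sqrt{C(f,n)}$ for the regular case. The one point you flag as delicate --- that the optimal fidelity is the inner product of the decreasingly sorted Schmidt vectors --- is indeed handled in the paper exactly as you propose, by citing the canonical-form discussion and \cite{Vidal-Jonathan-Nielsen-00}.
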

\begin{proof} 
Let $F$ be the fidelity of the embezzlement protocol, minimized over
$|\varphi\>$. Lower bound $1-F$ by considering specifically 
$|\varphi\> = (|11\>+|22\>) / \sqrt{2}$:
\[
1 - \fid(\ket{\mu(n)}, \ket{\omega}) = 
1 - \sum_i \mu_i \omega_i = 
\frac{1}{2} \sum_{i=1}^{2n} (\mu_i - \omega_i)^2 
\geq \frac{1}{2} (\mu_1 - \omega_1)^2 = 
\frac{1}{2} \left(1 - \frac{1}{\sqrt{2}}\right)^2 \mu_1^2 \,,
\]
where we use the shorthard $\mu_i$ for $\mu(i,n)$, $\omega_i$'s are 
the Schmidt
coefficients of $\ket{\omega}$ in decreasing order, 
and $\omega_1 = \mu_1/\sqrt{2}$.
Since $\mu_1 > 0$ (else \ket{\mu(n)} cannot be a valid quantum state),   
$F \rightarrow 1$ implies $\mu(1,n) \rightarrow 0$ as $n \rightarrow \infty$. 

For regular families \ket{\mu(f,n)}, $\mu(1,n) = f(1)/\sqrt{C(f,n)}$, so 
$C(f, n)\rightarrow \infty$ as $n \rightarrow \infty$.
\end{proof}

Note that \ket{\mu(f, n)} $=$ \ket{\mu(cf, n)} for any constant
$c$. Thus, we consider the {\it order} of a regular universal
embezzling family defined as follows: a universal embezzling family
has {\it order} $g$ if and only if $C(f,n)=\Theta(g)$, {\it e.g.},
\ket{\mu(f_{dh},n)} has order $\ln n$.  Lemma \ref{divergentcn} shows
that the ``misalignment'' of the first terms of \ket{\mu(n)} and
\ket{\omega(n)} has to be corrected by a divergent order.  

The next lemma gives a sufficient condition in terms of the asymptotic
behavior of the ratio $\rho(\ket{\varphi}, f, i) \, {:}{=} \,
\omega_i/\mu_i$.  First, given $\ket{\varphi} = \sum_{j=1}^m \varphi_j
|j\>|j\>$ and $f$, we explain how to make this ratio well-defined for
all $i \in \N$.
Fix an arbitrary $n$ and let $\mu(i,n) = f(i)/\sqrt{C(f,n)}$ 
for $i=1,\cdots,n$.
Let $\omega(i,n)$ be the $i$-th largest element in 
$S_n = \{ \mu(i,n) \varphi_j \}$.
Define $\rho(\ket{\varphi}, f, i)$ to be $\omega(i,n)/\mu(i,n)$ for
$i=1,\cdots,n$.  Note that the $\sqrt{C(f,n)}$ factors cancel out in
the ratios.
Furthermore, let $n'>n$ and define $\omega(i,n')/\mu(i,n')$ for $i =
1,\cdots, n'$ similarly.  The first $n$ ratios coincide with
$\omega(i,n)/\mu(i,n)$ because the $n$ largest terms in $S_{n'} = \{
\mu(i,n') \varphi_j \}$ are labeled by the same $(i,j)$'s as those in
$S_n$.

\begin{lemma} \label{convergenttl}
Let $f : \N \mapsto \R^+$ be a decreasing function with 
$C(f, n)\rightarrow\infty$.  
If $\forall\ket{\varphi}$, $\rho(\ket{\varphi},f,i) 
\rightarrow 1$, then $\ket{\mu(f, n)}$ forms a 
regular universal embezzling family.
\end{lemma}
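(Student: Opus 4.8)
The plan is to fix an arbitrary target $\ket{\varphi}$ and show that the optimal embezzlement fidelity $F_n {:}{=} F(\ket{\mu(f,n)},\ket{\omega(n)})$ tends to $1$ as $n\to\infty$; since this would hold for every $\ket{\varphi}$, the family is universal by definition. (One could instead restrict to Schmidt rank $2$ targets and quote Lemma~\ref{only2qubits}, but the hypothesis is already stated for all $\ket{\varphi}$, so I would argue for each directly.)

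First I would put the optimal fidelity into a convenient closed form. As in Lemma~\ref{divergentcn}, the optimal protocol aligns the decreasing Schmidt coefficients, so $F_n = \sum_i \mu(i,n)\,\omega(i,n)$. Writing $\omega(i,n)=\rho_i\,\mu(i,n)$ with $\rho_i {:}{=} \rho(\ket{\varphi},f,i)$ for $i\le n$, and noting $\mu(i,n)=0$ for $i>n$ so that those terms drop out, one gets $F_n=\sum_{i=1}^n \rho_i\,\mu(i,n)^2$. Since $\sum_{i=1}^n\mu(i,n)^2=1$, this yields the clean identity
\[
1-F_n=\sum_{i=1}^n(1-\rho_i)\,\mu(i,n)^2=\frac{1}{C(f,n)}\sum_{i=1}^n(1-\rho_i)\,f(i)^2\,.
\]
As a consistency check one can fold the ``tail'' $\sum_{i>n}\omega(i,n)^2=\sum_{i=1}^n(1-\rho_i^2)\,\mu(i,n)^2$ back in and recover the $\tfrac12\sum_i(\mu_i-\omega_i)^2$ expression used in Lemma~\ref{divergentcn}.

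Next I would run a ``split at a threshold'' estimate. A convergent sequence is bounded, so $\rho_i$ is bounded, and given $\eps>0$ there is a fixed index $I$ with $|1-\rho_i|<\eps$ for all $i>I$. Bounding $1-F_n$, which is nonnegative, by its sum of absolute values and splitting the sum at $I$,
\[
0\le 1-F_n\le\frac{1}{C(f,n)}\sum_{i=1}^{I}|1-\rho_i|\,f(i)^2+\frac{\eps}{C(f,n)}\sum_{i=I+1}^{n}f(i)^2\le\frac{K_I}{C(f,n)}+\eps\,,
\]
where $K_I {:}{=}\sum_{i=1}^I|1-\rho_i|\,f(i)^2$ is a constant independent of $n$, and the second sum is at most $C(f,n)$. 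Letting $n\to\infty$ and using $C(f,n)\to\infty$ annihilates the head term, giving $\limsup_n(1-F_n)\le\eps$; since $\eps$ is arbitrary, $F_n\to1$.

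The entire content sits in the last display: the ``tail'' beyond the threshold is controlled by the hypothesis $\rho_i\to1$, while the unavoidable finite ``head'' of misaligned leading coefficients --- the very obstruction quantified in Lemma~\ref{divergentcn} --- is washed out precisely because $C(f,n)$ diverges, which is exactly why that divergence is assumed. I expect the only delicate point to be justifying the closed form for $F_n$: one must remember that $\ket{\omega(n)}$ carries up to $nm$ nonzero Schmidt coefficients while $\ket{\mu(f,n)}$ has only $n$, so the missing portion $\sum_{i>n}\omega(i,n)^2$ has to be accounted for. It is also essential that $\rho_i$ is well defined independently of $n$ (as established just before the lemma), since this is what makes the threshold $I$ a fixed index rather than one that drifts with $n$.
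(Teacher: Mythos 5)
Your proposal is correct and takes essentially the same route as the paper: both arguments fix $\eps>0$, split the fidelity sum at a fixed threshold index beyond which $\rho_i$ is $\eps$-close to $1$ (possible precisely because $\rho_i$ is $n$-independent), and wash out the finite head using $C(f,n)\to\infty$, ending with a bound of the form $1-F \le \eps + \mathrm{const}_\eps/C(f,n)$, which is the paper's $1-F<\eps+C(f,n_\eps)/C(f,n)$. Your exact identity $1-F_n=\sum_{i=1}^n(1-\rho_i)\,\mu(i,n)^2$ with absolute values is a cosmetic repackaging of the paper's one-sided estimate $\omega_i>(1-\eps)\mu_i$ followed by dropping the head and using normalization.
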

\begin{proof}
Since $\rho \rightarrow 1$, given any $\eps > 0$, $\exists n_\eps$ such that 
$(1-\eps)\mu_i < \omega_i < (1+\eps)\mu_i$ for all $i > n_\eps$.  Thus 
\begin{align*}
\fid(\ket{\mu(f, n)}, \ket{\omega}) & = \sum_{i=1}^n\mu_i\omega_i \; = \; 
\sum_{i=1}^{n_\eps}\mu_i\omega_i \; + \! \! \sum_{i={n_\eps}+1}^n\mu_i\omega_i
\; \; > \sum_{i={n_\eps}+1}^n\mu_i\omega_i 
\\
& > (1-\eps)\sum_{i={n_\eps}+1}^n \mu_i^2 
\; > \; (1-\eps) - \sum_{i=1}^{n_\eps} \mu_i^2 
\; > \; (1-\eps) - \frac{C(f,n_\eps)}{C(f,n)} \,.
\end{align*}
Since $n_\eps$ does not depend on $n$, and $C(f, n) \rightarrow
\infty$, $\fid(\ket{\mu(f, n)}, \ket{\omega}) \rightarrow 1$. Thus,
\ket{\mu(f, n)} forms a universal embezzling family. In fact, 
$1-F < \eps + C(f,n_\eps) / C(f,n)$.  
\end{proof}

We note on the side that Lemma \ref{convergenttl} does not have a
natural converse.  Universal embezzling families may exist with
infinitely many but intermittent violations of the condition
$\rho(\ket{\varphi}, f, i) \approx 1$.

\section{Variations on \ket{\mu(f_{dh}, n)}}

In this section and the next, we focus on regular universal embezzling
families.  We consider the ``simplest'' variation from $f_{dh}$, which
is $f = g / \sqrt{x}$. This construction can be used in two ways to
yield a universal embezzling family.

\begin{lemma} \label{h_construction}
Let $h: \N \rightarrow \R^+$. If, $C(f, n) \rightarrow \infty$, 
$f = h / \sqrt{x}$ is decreasing, and 
$h(kx + c) / h(x) \rightarrow 1$ as $x \rightarrow \infty$ 
for any constant $k \in \N$, $c \in \N \bigcup \{0\}$, 
then, \ket{\mu(f, n)} forms a universal embezzling family.
\end{lemma}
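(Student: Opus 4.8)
The plan is to verify the hypotheses of Lemma~\ref{convergenttl}: the hypothesis $C(f,n)\to\infty$ is handed to us, so the only thing left is to show that the ratio $\rho(\ket{\varphi},f,i)=\omega(i,n)/\mu(i,n)$ tends to $1$ for every $\ket{\varphi}=\sum_j\varphi_j\ket{jj}$ (by Lemma~\ref{only2qubits} it would in fact suffice to handle the Schmidt rank $2$ case $\varphi_1=\alpha\ge\varphi_2=\beta>0$, but the counting below is uniform in the rank). Since $1/\sqrt{C(f,n)}$ cancels in $\rho$, I would work with unnormalised values: let $\tilde\omega_i$ be the $i$-th largest element of the multiset $\{\varphi_j f(l)\}_{j,l}$, so that $\rho=\tilde\omega_i/f(i)$ and it suffices to prove $\tilde\omega_i/f(i)\to1$. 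The natural bookkeeping device is the counting function $A(\tau):=|\{l:f(l)\ge\tau\}|$, for which $A(f(i))=i$ by monotonicity of $f$ (ties, if any, are harmless), together with $N(\tau):=|\{(j,l):\varphi_j f(l)\ge\tau\}|=\sum_j A(\tau/\varphi_j)$, the number of terms of the combined list that are at least $\tau$.

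The crux is a scale-invariance statement for $h$: for every fixed $\lambda>0$ and every integer sequence $m_i$ with $m_i/i\to\lambda$, one should have $h(m_i)/h(i)\to1$. I would establish this in two steps. First, for rational $\lambda=p/q$ the hypothesis applies verbatim: writing $i=qs+t$ with $s=\lfloor i/q\rfloor$ and $0\le t<q$, one has $\lfloor(p/q)i\rfloor=ps+t'$ with $t'$ ranging over a bounded set, so both $h(\lfloor(p/q)i\rfloor)=h(ps+t')$ and $h(i)=h(qs+t)$ are comparable to $h(s)$ via $h(kx+c)/h(x)\to1$, whence $h(\lfloor(p/q)i\rfloor)/h(i)\to1$. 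Second, for irrational $\lambda$ — the generic case for the exponents $\varphi_j^2$ forced on us below — I would sandwich: choosing rationals $\lambda_-<\lambda<\lambda_+$ and using that $f=h/\sqrt{x}$ is decreasing to trap $h(m_i)=\sqrt{m_i}\,f(m_i)$ between $\sqrt{\lambda_- i}\,f(\lceil\lambda_+ i\rceil)$ and $\sqrt{\lambda_+ i}\,f(\lfloor\lambda_- i\rfloor)$, which by the rational case equal $(1+o(1))\sqrt{\lambda_\mp/\lambda_\pm}\,h(i)$; letting $\lambda_\pm\to\lambda$ closes the gap.

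Granting this, I would show that for every constant $\gamma>0$, $A(\gamma f(i))/i\to\gamma^{-2}$. Indeed, for trial indices $l_\pm\approx(1\pm\delta)\gamma^{-2}i$ scale-invariance gives $f(l_\pm)=\bigl(h(l_\pm)/h(i)\bigr)\,h(i)/\sqrt{l_\pm}\to(\gamma/\sqrt{1\pm\delta})\,f(i)$, so that $f(l_+)<\gamma f(i)<f(l_-)$ eventually and hence $l_-\le A(\gamma f(i))<l_+$; letting $\delta\to0$ proves the claim. Substituting into $N$ yields $N(\gamma f(i))=\sum_j A((\gamma/\varphi_j)f(i))\to i\,\gamma^{-2}\sum_j\varphi_j^2=i/\gamma^2$. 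Taking $\gamma=1+\eps$ shows that fewer than $i$ terms exceed $(1+\eps)f(i)$, so $\tilde\omega_i\le(1+\eps)f(i)$; taking $\gamma=1-\eps$ shows that more than $i$ terms are at least $(1-\eps)f(i)$, so $\tilde\omega_i\ge(1-\eps)f(i)$. Thus $\tilde\omega_i/f(i)\to1$, i.e. $\rho(\ket{\varphi},f,i)\to1$, and Lemma~\ref{convergenttl} delivers the conclusion.

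The main obstacle is precisely the passage from the hypothesized integer rescalings $h(kx+c)/h(x)\to1$ to the irrational ratios $\varphi_j^2$ that the sorting imposes; the monotonicity of $f$ is what lets the rational sandwich control these, and it is the one place where the special form $f=h/\sqrt{x}$ (rather than an arbitrary decreasing $f$) is genuinely used. A secondary technical point is to run the counting without circularity — I avoid presupposing $A(\gamma f(i))\asymp i$ by comparing the threshold $\gamma f(i)$ against the explicit trial indices $l_\pm$, so that scale-invariance is invoked only at ratios $\lambda_\pm$ fixed in advance.
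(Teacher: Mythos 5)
Your proof is correct, but it takes a genuinely different route from the paper. The paper first invokes Lemma~\ref{only2qubits} to reduce to Schmidt rank $2$, sets $z=(\alpha/\beta)^2$, and for rational $z=p/q$ builds an \emph{explicit suboptimal ordering} $\ket{\widetilde\omega}$ by interleaving blocks of $p$ $\alpha$-terms with blocks of $q$ $\beta$-terms; the ratio $\widetilde\omega_i/\mu_i$ then factors as $\alpha\sqrt{(l(p+q)+C)/(lp+C_1)}\cdot h(lp+C_1)/h(l(p+q)+C)\to\alpha\sqrt{(p+q)/p}=1$, directly from the hypothesis on $h$ (your residue bookkeeping $i=qs+t$ is exactly the paper's preliminary observation that $h(k_1x+c_1)/h(k_2x+c_2)\to1$). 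Irrational $z$ is then handled by a rational approximation $p/q\in(z-\delta,z+\delta)$, with the caveat that the ratios then tend only to values within $O(\delta)$ of $1$, so strictly speaking Lemma~\ref{convergenttl} applies only through its quantitative form $1-F<\eps+C(f,n_\eps)/C(f,n)$ --- a point the paper glosses over. You instead work with the \emph{true} sorted sequence for a state of arbitrary finite rank, via the counting functions $A(\tau)$ and $N(\tau)=\sum_j A(\tau/\varphi_j)$, and you push the rational approximation one level down, into a scale-invariance lemma ($h(m_i)/h(i)\to1$ whenever $m_i/i\to\lambda>0$) proved by sandwiching with the monotonicity of $f=h/\sqrt{x}$. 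This buys you three things: $\rho(\ket{\varphi},f,i)\to1$ holds \emph{exactly} even for irrational ratios, so Lemma~\ref{convergenttl} applies verbatim; all Schmidt ranks are handled uniformly without routing through Lemma~\ref{only2qubits} (whose use costs a factor $\lceil\log_2 m\rceil^2$ in $1-F^2$); and you never need to argue that a hand-built suboptimal ordering lower-bounds the optimal fidelity. What the paper's block construction buys in exchange is concreteness --- one sees exactly which $\omega$-term sits at which position --- whereas your threshold-counting argument is the same sandwich in disguise ($\sqrt{\lambda_\mp/\lambda_\pm}$ playing the role of $\alpha\sqrt{(p+q)/p}$) but organized around cardinalities rather than positions. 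Two small points worth making explicit if you write this up: the hypothesis with $k\geq 2$ forces $f\to0$ (otherwise $h(2x)/h(x)\to\sqrt 2$), which is what makes $A(\tau)$ finite, though as you note your $l_\pm$ comparison establishes the needed finiteness directly; and the convergence over the finitely many residues $t\in\{0,\dots,q-1\}$ and finitely many indices $j$ should be stated as uniform, which is immediate.
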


\begin{proof}
First, if 
$h(kx + c) / h(x) \rightarrow 1$ as $x \rightarrow \infty$,
for any constants $k \in \N$, $c \in \N \bigcup \{0\}$,
then, 
$h(k_1x + c_1) / h(k_2x + c_2) \rightarrow 1$ as $x \rightarrow \infty$ 
for any constants $k_1, k_2 \in \N$ and $c_1, c_2 \in \N \bigcup \{0\}$.  
This follows from the quotient rule 
\[
   \lim_{x \rightarrow \infty} 
   \frac{h(k_1x + c_1)}{h(k_2x + c_2)} = 
   \frac{ \lim_{x \rightarrow \infty} \frac{h(k_1x + c_1)}{h(x)} }
        { \lim_{x \rightarrow \infty} \frac{h(k_2x + c_2)}{h(x)} }
   = 1 \,.
\]

Following Lemma \ref{only2qubits}, 
consider $\ket{\varphi} = \alpha\ket{11} + \beta
\ket{22}$. Let $z = (\alpha / \beta)^2$. Recall that the optimal fidelity is 
achieved with decreasing Schmidt coefficients $\omega_i$ for $|\omega\>$. 
Here, we consider a particular ordering of Schmidt coefficients, 
$\ket{\widetilde{\omega}}$, which can be suboptimal. 
Then, any lower bound on $\fid(\ket{\mu(f,n)}, 
\ket{\widetilde{\omega}})$ also applies to $\fid(\ket{\mu(f,n)}, \ket{\omega})$.

First, suppose $z = p/q \in \Q$. Call the $p$ largest $\alpha$-terms (see 
remark to Lemma \ref{only2qubits}) the first $\alpha$-block, the next $p$
largest $\alpha$-terms the second $\alpha$-block, and so on. 
Define the $\beta$-blocks
similarly, but with block size $q$ instead. Construct 
$\ket{\widetilde{\omega}}$ such that the $l$-th block of $p+q$ terms comes
from the $l$-th $\alpha$- and $\beta$-blocks.  In other words, for 
$l(p+q)+1 \leq i \leq (l+1)(p+q)$:
\[
  \widetilde{\omega}_i = \begin{cases} \alpha f(lp + C_1) / C(f, n) 
                          \text{ or } \\
                          \beta f(lq + C_2) / C(f, n) \end{cases}
\]
where $1 \leq C_1 \leq p$ and $1 \leq C_2 \leq q$. Now consider $\widetilde{
\omega}_i / \mu_i$ where $i = l(p+q) + C$ for any $0 \leq C \leq p+q$. If 
$\widetilde{\omega}_i$ is an $\alpha$-term, then 
\[
  \frac{\widetilde{\omega}_i}{\mu_i} = \alpha \sqrt{\frac{l(p+q) + C}{lp + C_1}}
  \cdot\frac{h(lp + C_1)}{h( \,l (p{+}q) + C)} \,.
\]
As $i \rightarrow \infty$, $l \rightarrow \infty$, 
$h(lp + C_1) / h(l(\,p{+}q)+C) \rightarrow 1$, 
so $\widetilde{\omega}_i / \mu_i \rightarrow \alpha \sqrt{(p+q)/p} = 1$. If 
$\widetilde {\omega}_i$ is a $\beta$-term, with a similar argument, 
$\widetilde\omega_i / \mu_i \rightarrow \beta \sqrt{(p+q)/q} = 1$. 
Then, by Lemma \ref{convergenttl}, 
$\fid(\ket{\mu(f,n)}, \ket{\widetilde{\omega}})\rightarrow1$. 

If $z \not\in\Q$, the above proof applied to rational approximations of 
$z$ provides the desired result. More specifically,  
if $z = (\alpha / \beta)^2 \not\in\Q$, $\forall\delta > 0$, 
$\exists z^\prime = p/q \in \Q$ such that
\begin{equation} \label{bound}
\left(\frac{\alpha}{\beta}\right)^2 - \delta < \frac{p}{q} < 
  \left(\frac{\alpha}{\beta}\right)^2 + \delta \,.
\end{equation}
The previous argument shows that $\widetilde{\omega}_i/\mu_i$ tends to either 
$\alpha\sqrt{(p+q) / p}$ or $\beta\sqrt{(p+q) / q}$. Eliminating $p/q$ 
in these expression using (\ref{bound}) gives:
\[
1 - \frac{\delta\beta^4}{\alpha^2 + \delta\beta^2} < \alpha\sqrt{\frac{p+q}{p}}
    < 1 + \frac{\delta\beta^4}{\alpha^2 + \delta\beta^2} ~~\text{ and }~~
1 - \delta\beta^2 < \beta\sqrt{\frac{p+q}{q}} < 1 + \delta\beta^2 
\]
and both quantities tend to $1$ as $\delta \rightarrow 0$.
\end{proof}

\begin{lemma} \label{g_construction}
Let $g: \N \mapsto \R^+$ be an increasing function such that $f = g / 
\sqrt{x}$ is decreasing. If, in addition, $\forall m \in \N, C(f, n/m) / C(f, n)
\rightarrow 1$ as $n \rightarrow \infty$, then \ket{\mu(f,n)} forms a universal
embezzling family.
\end{lemma}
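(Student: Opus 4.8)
The plan is to reduce to Schmidt rank $2$ via Lemma~\ref{only2qubits} and then, for each target $\ket{\varphi}=\alpha\ket{11}+\beta\ket{22}$ (with $\alpha\ge\beta>0$ after relabeling, per the remark following Lemma~\ref{only2qubits}), exhibit an explicit, generally suboptimal reordering of the Schmidt coefficients of $\ket{\mu(f,n)}\ket{\varphi}$ whose overlap with $\ket{\mu(f,n)}$ already tends to $1$. Writing $\mu_i=\mu(i,n)=f(i)/\sqrt{C(f,n)}$, the coefficients of $\ket{\mu}\ket{\varphi}$ form the multiset $\{\alpha\mu_i,\beta\mu_i\}$; for any bijective assignment $\widetilde{\omega}$ of these $2n$ values to positions $1,\dots,2n$, the rearrangement inequality gives $\sum_{i}\mu_i\widetilde{\omega}_i\le\sum_i\mu_i\omega_i=\fid(\ket{\mu},\ket{\omega})$, so it suffices to produce one convenient $\widetilde{\omega}$ with $\sum_{i=1}^{n}\mu_i\widetilde{\omega}_i\to1$ (the tail positions $i>n$ are multiplied by $\mu_i=0$). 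I first observe that the hypotheses force $C(f,n)\to\infty$: since $g$ is positive and increasing, $g(i)\ge g(1)>0$, whence $C(f,n)=\sum_{i=1}^n g(i)^2/i\ge g(1)^2 H_n\to\infty$.

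I would first treat rational $z:=(\alpha/\beta)^2=p/q$ in lowest terms, so $\alpha^2=p/s$ and $\beta^2=q/s$ with $s:=p+q$. Partition the positions into blocks of length $s$, assigning to block $l$ (positions $ls+1,\dots,(l+1)s$) the $\alpha$-values $\alpha\mu_{lp+1},\dots,\alpha\mu_{(l+1)p}$ at its first $p$ positions, followed by the $\beta$-values $\beta\mu_{lq+1},\dots,\beta\mu_{(l+1)q}$ at its last $q$ positions; this is a genuine bijection. The role of ``$g$ increasing'' is to control each value by the capacity $\mu_i$ at its position through purely index-based inequalities. Because $g$ is nondecreasing and $s\ge p$, one gets $f(ls+r)/f(lp+r)\ge\sqrt{(lp+r)/(ls+r)}\ge\sqrt{p/s}=\alpha$, i.e. $\mu_{ls+r}\ge\alpha\mu_{lp+r}$ for $1\le r\le p$, so each $\alpha$-term contributes $\mu_{ls+r}\cdot\alpha\mu_{lp+r}\ge\alpha^2\mu_{lp+r}^2$. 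The analogous bound for the $\beta$-terms is not clean because they sit at the later positions $ls+p+r'$; there the same computation only yields $\mu_{ls+p+r'}\ge\beta\mu_{lq+r'}\sqrt{l/(l+1)}$ (using $p+r'\le s$), with an asymptotically negligible deficit.

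Summing over the blocks fully contained in $\{1,\dots,n\}$ (there are $\lfloor n/s\rfloor$ of them; the $O(s)$ leftover terms vanish in the normalized sum), the $\alpha$-contribution is at least $\alpha^2\sum_{j=1}^{p\lfloor n/s\rfloor}\mu_j^2=\alpha^2\,C(f,p\lfloor n/s\rfloor)/C(f,n)$, while the $\beta$-contribution is at least $(1-\eps)\beta^2\sum_{j=1}^{q\lfloor n/s\rfloor}\mu_j^2$ once $l$ exceeds a fixed $l_\eps$ making $\sqrt{l/(l+1)}\ge1-\eps$; here $C(f,n)\to\infty$ guarantees that the finitely many excluded small-$l$ blocks are negligible. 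Since $p\lfloor n/s\rfloor,\,q\lfloor n/s\rfloor\ge\lfloor n/s\rfloor$ and $C$ is nondecreasing, the hypothesis applied with $m=s$ yields $C(f,p\lfloor n/s\rfloor)/C(f,n)\to1$ and likewise for $q$. Hence $\fid\ge\alpha^2+(1-\eps)\beta^2$ in the limit for every $\eps>0$, so $\fid\to\alpha^2+\beta^2=1$.

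For irrational $z$ I would approximate: given $\eps$, choose rational $z'=p/q$ with $\fid(\ket{\varphi},\ket{\varphi'})\ge1-\eps$, where $\ket{\varphi'}$ is the rank-$2$ state for $z'$, embezzle $\ket{\varphi'}$ with fidelity $\to1$ by the rational case, and combine the two approximations through the trace-distance triangle inequality~(\ref{subaddtrdist}), exactly as the substitution bound is deployed in Lemma~\ref{only2qubits}; letting $\eps\to0$ finishes. I expect the main obstacle to be the $\beta$-block estimate: the packing is asymptotically tight, since the two value-streams of densities $p/s$ and $q/s$ exactly fill the blocks, so the naive pointwise bound $\widetilde{\omega}_i\le\mu_i$ genuinely fails for the $\beta$-terms, and the argument must instead rely on the vanishing deficit $1-\sqrt{l/(l+1)}$ together with $C(f,n)\to\infty$ to discard the small-$l$ blocks. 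Checking the index inequalities uniformly in $r,r'$ and handling the non-divisible boundary are the routine-but-delicate remaining points.
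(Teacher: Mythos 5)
Your proof is correct, but it takes a genuinely different route from the paper's. The paper's proof (adapted from \cite{DH02}) rests on a single pointwise claim: for the \emph{optimal decreasing} rearrangement, $\omega_j \leq \mu_j$ for every $j$. This is proved by counting: with $N(t) = |\{l : \mu_t < \omega_l\}|$, the monotonicity of $g$ converts the condition $\mu_t < \varphi_i \mu_j$ into $j < \varphi_i^2 t$, so each Schmidt coefficient $\varphi_i$ of the target contributes fewer than $\varphi_i^2 t$ offending terms and $N(t) < t \sum_i \varphi_i^2 = t$. The fidelity bound is then immediate, $\fid = \sum_i \mu_i \omega_i \geq \sum_i \omega_i^2 \geq C(f,\lfloor n/m \rfloor)/C(f,n) \to 1$, and it is quantitative and uniform over \emph{all} Schmidt-rank-$m$ targets simultaneously, with no rank-$2$ reduction, no suboptimal-ordering construction, and no rational/irrational case split. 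You instead transplant the block machinery from the proof of Lemma \ref{h_construction}, upgrading its asymptotic ratio estimates to exact index inequalities via the monotonicity of $g$ (your bounds $\mu_{ls+r} \geq \alpha\mu_{lp+r}$ and $\mu_{ls+p+r'} \geq \beta\sqrt{l/(l+1)}\,\mu_{lq+r'}$ both check out, as does your derivation that the hypotheses force $C(f,n) \geq g(1)^2 \sum_{i=1}^n 1/i \to \infty$, a fact the paper's route never needs), and you pay for it with the $\lceil \log_2 m\rceil^2$ overhead of Lemma \ref{only2qubits} plus a density argument through Eq.~(\ref{subaddtrdist}) for irrational $z$, which makes your convergence pointwise in $\ket{\varphi}$ rather than uniform over rank-$m$ states. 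It is worth noting that the ``main obstacle'' you flag---the failure of the pointwise bound $\widetilde{\omega}_i \leq \mu_i$ on the $\beta$-terms---is an artifact of your particular interleaving: the paper's counting argument shows the domination \emph{does} hold for the decreasing rearrangement, and recognizing this collapses the entire proof to a few lines. Your approach buys a concrete, constructive protocol and reuses machinery already built for Lemma \ref{h_construction}; the paper's buys brevity, generality in $m$, and an explicit error bound $1 - \fid \leq 1 - C(f,\lfloor n/m\rfloor)/C(f,n)$.
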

\begin{proof}
This proof derives heavily from \cite{DH02}.  

Claim: $\forall j \,, ~\omega_j \leq \mu_j$. Let $N(t) = 
|\{l: \mu_t < \omega_l\}|$. The claim is equivalent to $N(t) < t$ as
$\{\omega_l\}$ is decreasing. 
Since $\omega_l = \varphi_i f(j) / C(f,n)$ for some $i,j$, we let
$N_i^t = | \{j: \mu_t < \varphi_i f(j) / C(f,n)\}|$. Now, 
\[ 
   \mu_t < \varphi_i \frac{f(j)}{C(f,n)} ~~\Leftrightarrow~~ 
   f(t) < \varphi_i f(j) ~~\Leftrightarrow~~ 
   \frac{jg(t)^2}{tg(j)^2} < \varphi_i^2 \,.
\]
We can infer that $t \leq j$ since 
the middle inequality implies $f(j) < f(t)$ and $f$ is decreasing. 
Then, the last inequality and the monotonicity of $g$ imply that  
$j < \varphi_i^2 t$, so $N_i^t < \varphi_i^2 t$ and $N(t) = \sum_i N_i^t < t$
(recall the normalization $\sum_i \varphi_i^2 = 1$). Finally, 
\begin{equation} \label{g_fid}
\fid(|\mu(f,n)\>,\ket{\omega}) = \sum_{i=1}^n \mu_i \omega_i 
\geq \sum_{i=1}^n \omega_i^2 
\geq \sum_{j=1}^{\lfloor n/m \rfloor}\sum_{i=1}^m
	\frac{\varphi_i^2 f(j)^2}{C(f,n)} = 
        \frac{C(f, \lfloor n/m \rfloor )}{C(f, n)} \rightarrow 1
\end{equation}
where the last inequality comes from replacing the sum with possibly 
fewer and smaller terms.
\end{proof}

Lemma \ref{g_construction} states that $f$ can fall off slower than $f
= 1 / \sqrt{x}$ as long as $C(f, n/m) / C(f, n) \rightarrow 1$.

\section{New classes of regular universal embezzling families} 


Now we present two sequences of regular universal embezzling families using
Lemmas \ref{h_construction} and \ref{g_construction}. First, define 
$\lambda(x) = \ln (x + e)$ and its n-fold composition:
$\lambda^0(x)=x$, $\lambda^1(x)=\ln(x+e)$, $\lambda^2(x)=\ln(\ln(x+e) + e)$, 
and so on.

Now define the $G$ and $H$ functions of class $r$ as:
\begin{equation}	
	G_r(x) = \frac{1}{\sqrt x} \prod_{s=1}^r \sqrt{\lambda^s(x)}
\end{equation} \begin{equation}
	H_r(x) = \frac{1}{\sqrt x} \prod_{s=1}^r \frac{1}{\sqrt{\lambda^s(x)}}
\end{equation}

For every $r$, we will see that \ket{\mu(G_r, n)} and \ket{\mu(H_r, n)} have
different orders and are universal embezzling families. Therefore, the number
of orders for regular universal embezzling families is infinite. 

To estimate $C(H_r, n)$, we use integral approximations: 
\[
\frac{d}{dx} \lambda^{r+1}(x) = 
  \prod_{s=0}^{r} {1 \over \lambda^s(x) + e} \approx
  H_r(x)^2 
\Rightarrow \sum_{i=1}^n H_r(i)^2 \approx 
  \int_{1}^n H_r(x)^2dx \approx
  \lambda^{r+1}(n)
\]
Thus, the order of \ket{\mu(H_r, n)} is $\lambda^{r{+}1}(n) \approx 
\ln^{r{+}1}(n)$ for large $n$.

For $C(G_1, n)$, we apply integral approximations and the inequality 
$G_1(x)^2 \leq (\ln(x{-}e))/(x{-}e)$ for $x \geq 5$ to obtain:
\begin{equation}
\label{g_estimate}
  \int_1^n \frac{\ln(x+e)}{x+e} < \int_1^n \frac{\ln(x+e)}{x} \approx 
  \sum_{i=1}^n G_1(i)^2 \leq \sum_{i=1}^5 G_1(i)^2 
  + \int_5^n \frac{\ln(x-e)}{x-e} \,.
\end{equation}
The integrals are all well approximated by $(\ln n)^2/2$.  Thus 
$C(G_1,n) = \Theta[(\ln n)^2]$.  
For general $C(G_r, n)$, there is no simple approximation, but we can
show that subsequent orders are progressively ``higher.''  
First, 
\begin{equation}
   C(G_{r{+}1},n) = \sum_{i=1}^n G_{r{+}1}(i)^2 
   = \sum_{i=1}^n \lambda^{r{+}1}(i) \, G_{r}(i)^2 
   \geq \sum_{i=1}^n G_{r}(i)^2 = C(G_{r},n) \,.
\label{eq:Gorder}
\end{equation}
We show by contradiction that $C(G_{r+1},n) \not = \Theta[C(G_r, n)]$. 
If so, there are constants $\kappa,n_0$, such that 
$\forall n > n_0$, $C(G_{r+1}, n) \leq \kappa C(G_r, n)$. 
Pick $n_1 > n_0$ so that $\lambda^{r{+}1}(n_1) \geq 3 \kappa$, 
and $n_2 > n_1$ such that 
$\sum_{i=1}^{n_1} G_r(i)^2 \leq \sum_{i=n_1{+}1}^{n_2} G_r(i)^2$.  Now,  
\[
 \kappa C(G_r,n_2) 
\leq  
 2 \kappa \sum_{i=n_1{+}1}^{n_2} G_r(i)^2
\leq
 \frac{2}{3} \lambda^{r{+}1}(n_1) \sum_{i=n_1{+}1}^{n_2} G_r(i)^2 
\leq  
 \frac{2}{3} \sum_{i=n_1{+}1}^{n_2} G_r(i)^2 \lambda^{r{+}1}(i)^2
\leq 
 \frac{2}{3} C(G_{r{+}1},n_2)
\]
a contradiction.


\subsection*{Embezzling Properties of $|\mu(G_r,n)\>$}

First, we sketch that $G_r(x)$ is decreasing. Let $t(x)=\lambda(x)/\sqrt{x}$.
Then, $t(x)$ is decreasing because its first derivative has 
the same sign as $\theta(x) = 2x - (x+e) \ln (x+e)$, and $\forall x>0$,  
$\theta(x) < 0$ because its first derivative is negative and 
$\theta(0) < 0$.
Therefore,  $\lambda(x+1) / \sqrt{x + 1} <
\lambda(x) / \sqrt{x}$ and $\lambda(x+1)/\lambda(x) < \sqrt{x+1} / \sqrt{x}$ 
for $x > 0$. Repeating this result yields: $\lambda^2(x+1)/\lambda^2(x) < 
\sqrt{\lambda(x+1)}/ \sqrt{\lambda(x)} < [(x + 1) / x]^{1/4}$, etc. Now:
\[
\frac{G_r(x+1)^2}{G_r(x)^2} = 
	\frac{x}{x+1}\prod_{i=1}^r\frac{\lambda^i(x+1)}{\lambda^i(x)} <
	\frac{x}{x+1}\prod_{i=1}^r\left[\frac{x+1}{x}\right]^{1/2^i} < 1
\]
so the positive functions $G_r$ are all decreasing. 

Second, $\forall r \geq 1, C(G_r, n)$ diverges (see Eq.~(\ref{eq:Gorder})).  

We can establish that \ket{\mu(G_1, n)} forms a universal embezzling family
using Lemma \ref{g_construction}, by using the estimate (\ref{g_estimate}) 
to conclude that 
\[ 
\frac{C(G_1, n/m)}{C(G_1, n)} \sim \left(1 - \frac{\ln m}{\ln n}\right)^2 \,. 
\]
However, the lower 
bound for fidelity of embezzlement by \ket{\mu(G_1, n)} is no better than 
that of \ket{\mu(f_{dh}, n)}, despite Lemma \ref{convergenttl} (recall: 
$1 - F < \eps + C(f, N_\eps) / C(f, n)$) and the higher order of $G_1$. 

For other $G_r$, we will show that Lemma \ref{h_construction} applies. We 
first show by induction that $\forall s \in \N$, $\lambda^s(k x + c) / 
\lambda^s(x) \rightarrow 1$ for any constants $k, c$ 

For $s=1$:
\begin{equation}
\frac{\lambda^1(kx + c)}{\lambda^1(x)} 
= \frac{\ln(x + c/k) + \ln k}
       {\ln(x)} \rightarrow 1 \,.
\label{requals1}
\end{equation}

For $s \geq 2$, both $\lambda^{s{-}1}(kx + c) \rightarrow \infty$ and 
$\lambda^{s{-}1}(x) \rightarrow \infty$. By induction hypothesis,
their ratio tends to $1$. Thus, the proven base case (\ref{requals1}) implies 
$\lambda^s(kx + c) / \lambda^s(x) = \lambda^1( \lambda^{s{-}1}
(kx + c)) / \lambda^1( \lambda^{s{-}1}(x)) \rightarrow 1$.  

Then, for any class $r$, by the limit rule for products and the continuity of 
$\sqrt{\cdot}$ and $1/\cdot$ over the range of interest, both 
$\prod_{s=1}^{r}\sqrt{\lambda^s(x)}$ and $\prod_{s=1}^r1/\sqrt{\lambda^s(x)}$   
satisfy the condition in Lemma \ref{h_construction}.  Thus, all $3$ conditions 
in Lemma \ref{h_construction} holds for $G_r$ and $|\mu(G_r,n)\>$ forms a
universal embezzling family.

\comment{
We claim $[\frac{d}{dx}\lambda^r(k_1x + c_1)]/[\frac{d}{dx}\lambda^r(k_2x + c_2)] 
\rightarrow 1$ for any constants $k_1, k_2,c_1, c_2, \forall r \in \N$. Then 
it is sufficient to apply L'H\^{o}pital's and other limit rules to show 
$h(k_1x+c_1) / h(k_2x + c_2) \rightarrow 1$. Proof of claim is via induction 
over $r$: $r = 1$ is easy to check, if the hypothesis holds for $r = k$, then 
by the chain rule:
\[
	\lim_{x\rightarrow\infty}\frac{\frac{d}{dx} \lambda^{k+1}(k_1x + c_1)}
							{\frac{d}{dx} \lambda^{k+1}(k_2x + c_2)} 
	= \lim_{x\rightarrow\infty}\frac{\lambda^k(k_2x+c_2)+e}
									{\lambda^k(k_1x+c_1)+e}
	\cdot\lim_{x\rightarrow\infty}\frac{\frac{d}{dx}\lambda^{k}(k_1x + c_1)}
								 	   {\frac{d}{dx}\lambda^{k}(k_2x + c_2)}
\]
The first limit is $1$ by L'H\^{o}pital's and the induction step; the second is 
$1$ by the induction step.
}

\subsection*{Embezzling Properties of $|\mu(H_r,n)\>$}
$H_r$ is obviously decreasing $\forall r$. We have already shown 
that $\prod_{s=1}^r
1/\sqrt{\lambda^s(x)}$ satisfies the condition in Lemma \ref{h_construction}
and $C(H_r, n) \rightarrow \infty$ from our estimate of $C(H_r, n)$. 
Therefore, by Lemma \ref{h_construction}, \ket{\mu(H_r, n)} forms a universal
embezzling family.

However, \ket{\mu(f_{dh}, n)} performs better when embezzling 
any entangled state.  This follows from
Lemma \ref{divergentcn} and the fact $C(H_r, n) / C(f_{dh},n) \rightarrow 0$ 
as $n \rightarrow \infty$.

\subsection*{Entanglement of \ket{\mu(f_{dh}, n)}, 
             \ket{\mu(G_1, n)}, and \ket{\mu(H_1, n)}}
Another metric of embezzlement efficiency is the amount of entanglement
required in creating \ket{\mu}. For the original embezzling family proposed
in \cite{DH02}, using integral approximations:
\[
  \text{Ent}(\ket{\mu(f_{dh}, n)}) = -\sum_{i=1}^n \mu_i^2 \log_2(\mu_i^2)
  \approx -\sum_{i=1}^n \frac{1}{i}\frac{1}{\ln n} \log_2 
                                   \frac{1}{i}\frac{1}{\ln n} \,.
\]
Simplifying the above and using integral approximations, 
the leading term of $\text{Ent}(\ket{\mu(f_{dh}, n)})$ is $(\log_2 n)/2$.  

Similarly, we can estimate $\text{Ent}(\ket{\mu(G_1, x)})$. 
We use $C(G_1,n) \approx (\ln n)^2/2$  
and the approximation $\lambda(x) \approx \ln x$ to conclude 
that 
\[
  \text{Ent}(\ket{\mu(G_1, n)})   
  \approx -\sum_{i=1}^n \frac{\ln i}{i}\frac{2}{(\ln n)^2} \log_2 
                                   \frac{\ln i}{i}\frac{2}{(\ln n)^2}
  \approx \frac{2}{3} \log_2 n
\]
where the last estimate concerns only the lead term and uses 
integral approximations.  

Finally, we use $C(H_1,n) \approx \lambda^2(n) \approx \ln \ln n$ to
estimate $\text{Ent}(\ket{\mu(H_1, n)})$ which is $\approx (\log_2 n) /
(\ln \ln n)$.

For a fixed Schmidt rank, $\text{Ent}(|\mu(G_1,n)\>)$ and
$\text{Ent}(|\mu(f_{dh},n)\>)$ are of the same order.  Meanwhile,
$\text{Ent}(|\mu(H_1,n)\>) \ll \text{Ent}(|\mu(f_{dh},n)\>)$.
However, if one fixes the precision, a higher Schmidt rank is 
needed to embezzle using $H_1$ than $f_{dh}$.

\section{Outperforming \ket{\mu(f_{dh}, n)}?!}

In the previous sections, we examine regular families that 
do not have order $\ln n$.  There are interesting 
sequences that are not regular.  
One such sequence is  
presented in \cite{LTW08} (due to \cite{HS-private}):
\begin{equation}
	\ket{\mu(n)} = \sqrt{\frac{2}{N+1}} \sum_{k=1}^N 
      \sin\left(\frac{k\pi}{N{+}1}\right)
      \ket{00}^{\otimes k} \ket{\varphi}^{\otimes N - k + 1}
\label{eq:aram-peter}
\end{equation}
where $n = 2^N$.  This sequence enables the embezzling of the specific
state $\ket{\varphi}$ with fidelity at least $1-\pi^2/2N^2 = 1-\pi^2/2
(\log_2 n)^2$, a marked improvement over the provable lower bound 
$1-O(1/\log_2 n)$ of the fidelity 
achieved by \ket{\mu(f_{dh}, n)}.  The sequence in (\ref{eq:aram-peter}) 
also saturates an
upper bound of the fidelity proved in \cite{DH02}.  However, if 
$\ket{\varphi} = (\ket{11} + \ket{22})
/ \sqrt{2}$ and Alice and Bob want to embezzle 
$\ket{\varphi'} = \alpha \ket{11} + \beta \ket{22}$, 
the fidelity $\rightarrow (\alpha+\beta)/\sqrt{2}$ as $N \rightarrow \infty$ 
which is bounded 
away from $1$ when $\ket{\varphi'} \neq \ket{\varphi}$.  

Instead, we propose the following. Let 
$gh$ be defined, for fixed $n$, and for $x \in \N, 1 \leq x \leq n$ as:
\[
  gh(x) = \begin{cases} H_1(1) \text{ when } x = 1 \\
    H_1(x) \text{ when } C(gh, x - 1) \geq \ln(x) \\
    G_1(x) \text{ when } C(gh, x - 1) < \ln(x) \,. \end{cases}
\]
Then define $GH(x)$ for $x \in \N, 1 \leq x \leq n$ as $gh(x)$ with
elements in decreasing order (the $n$ dependence is implicit here) and
designate $\ket{\mu(n)} = \sum_{i=1}^n GH(i)\ket{i}\ket{i}$. Due to
the limited dependence on $n$, we can still define $C(GH, n)$ as
before, and it differs from $\ln n$ by at most $G_1(n)^2$ or
$H_1(n)^2$, but both $G_1(x)$ and $H_1(x) \rightarrow 0$ as $x
\rightarrow \infty$. Therefore, $C(GH, n) \rightarrow \ln n$ as $n
\rightarrow \infty$.

The precise performance of $\ket{\mu(n)}$ as a universal embezzling 
family is hard to analyse.  So, we {\em numerically} evaluate the 
optimal fidelity (see Section \ref{primer}) of embezzling three sample 
states: 
$\ket{\varphi_{+}} = (2\ket{00} + \ket{11}) / \sqrt{5}$,  
$\ket{\varphi_{*}} = (\sqrt{\pi {-} 1}\ket{00} + \ket{11}) / \sqrt{\pi}$, 
and 
$\ket{\varphi_{\circ}} = (\ket{00} + \ket{11}) / \sqrt{2}$, 
using $\ket{\mu(n)}$ for $n = 2^N$, $N=3, \cdots, 33$.  
For comparison, we also perform numerical optimization for the fidelity of 
embezzlement using $\ket{\mu(f_{dh}, n)}$.   

Figure \ref{fig:num} summarizes the result.  

\begin{figure}[!ht]
\centering
\includegraphics[bb = 58 198 529 575,width=4.8in]{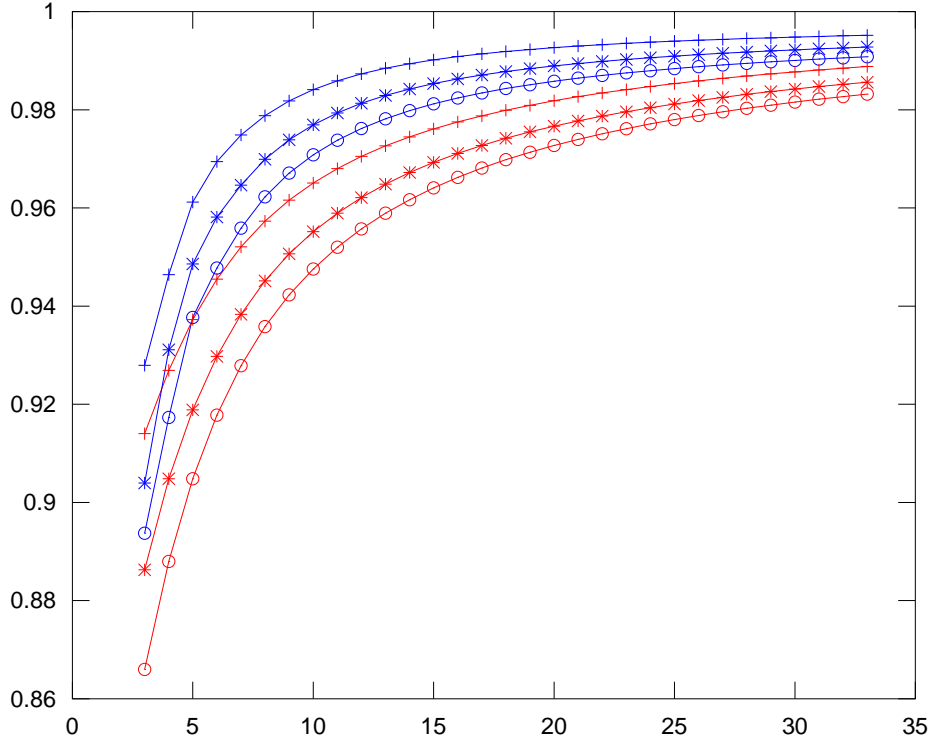}
\caption{Optimal fidelity of embezzlement as a function of the number
  of qubits ($N=\log_2 n$) held by each party.  The blue and red curves
  correspond to embezzlement using $\ket{\mu(GH, n)}$ and
  $\ket{\mu(f_{dh}, n)}$ respectively.  Data points marked by $+$, $*$,
  and $\circ$ correspond to $\ket{\varphi}$ being $\ket{\varphi_{+}}$, 
  $\ket{\varphi_{*}}$, and $\ket{\varphi_{\circ}}$ respectively.}
\label{fig:num}
\end{figure}


All calculations are done in IEEE double-precision.  The main source
of inaccuracy in the numerical optimization is the accumulation of
machine truncation errors in the calculation of $C(GH, n)$.  We
directly calculate $C(GH, n)$ for $3 \leq N \leq 26$ and approximate 
$C(GH, n)$ by $\ln n$ for $18 \leq N \leq 33$.  
The two methods yield optimal fidelities differing by 
less than $2 \times 10^{-6}$ for $18 \leq N \leq 26$.  


A quick inspection of Figure \ref{fig:num} suggests that
$\ket{\mu(n)}$ is indeed a universal embezzling family.  Furthermore,
$\ket{\mu(n)}$ outperforms $\ket{\mu(f_{dh}, n)}$ for the specific
cases studied.


We extrapolate the data to try to understand the asymptotic 
behavior of $\ket{\mu(n)}$.  
The least square fits to the optimal fidelities to embezzle 
$\ket{\varphi_{+}}, \ket{\varphi_{*}}, \ket{\varphi_{\circ}}$ 
using $\ket{\mu(n)}$ are:  
\begin{eqnarray}
F_{+} & = 0.9980 - 0.0759/N - 0.6358/N^2
\nonumber
\\
F_{*} & = 0.9976 - 0.1395/N - 0.6691/N^2
\nonumber
\\
F_{\circ} & = 0.9974 - 0.1971/N - 0.6862/N^2 \,.
\nonumber
\end{eqnarray}
The fitting parameters are insensitive to the method used to generate
$C(GH, n)$.  When fitting the data for $N_0 \leq N \leq 33$, the
fitting parameters are slightly sensitive to $N_0$.  We show the fits
for $N_0 = 10$, when the constant term is smallest, the magnitude for
the coefficients of the $1/N$ and $1/N^2$ terms are smallest and
largest respectively.  For $N_0$ ranging from $5$ to $20$, the
constant can increase by $0.001$, the magnitude of the second
coefficients can increase by $0.03$, that of the third coefficient can
decrease by $0.3$.  We cannot conclude convincingly whether 
$F \rightarrow 1$ 
as $N \rightarrow \infty$.  




The corresponding fits for the embezzling family 
$\ket{\mu(f_{dh},n)}$ for $N_0 = 10$ are: 
\begin{eqnarray}
F_{+} & = 0.999982 - 0.377165/N + 0.282380/N^2
\nonumber
\\
F_{*} & = 0.999970 - 0.484107 / N + 0.359519 / N^2
\nonumber
\\
F_{\circ} & = 0.999960 - 0.565744 / N + 0.418400 / N^2
\nonumber
\end{eqnarray}
When $N_0$ ranges from $5$ to $20$, the constant can increase by
$0.0001$, the magnitudes of the second and third coefficients can
increase by $0.01$ and $0.1$.  

From the various fits, $\ket{\mu(f_{dh},n)}$ starts to outperform 
$\ket{\mu(n)}$ when $N \approx 140-160$.  

We note on the side that \cite{DH02} provides lower and upper bounds
on the optimal fidelity of embezzlement using $\ket{\mu(f_{dh}, n)}$.
We present the actual optimal performance (numerically) for small
$N$ that may be of interest elsewhere.  



\section{Discussions}

\vspace*{-2ex}


We have provided necessary conditions and sufficient conditions for
universal embezzling in the bipartite setting.  We exhibit an infinite
number of inequivalent families, present a family that outperforms
that proposed in \cite{DH02} for small $N$, but the latter {\em
  appears} optimal asymptotically based on our numerics.  
Our work does not resolve whether there is a regular or general 
universal embezzling
family achieving fidelity $1-O(1/(\log_2 n)^2)$.  We hope our
results are a step towards answering some of these questions.

\section{Acknowledgements}

\vspace*{-2ex}

We thank Wim van Dam, Aram Harrow, Patrick Hayden, Thomas Vidick, and
Volkher Scholz for their generous sharing of research results
concerning embezzlement, and Thomas Vidick for very useful comments on
an earlier version of the manuscript.  The fidelity of embezzling
$\ket{\varphi'}$ from (\ref{eq:aram-peter}) was calculated by Michal
Kotowski.


\begin{thebibliography}{BDH{\etalchar{+}}09}

\bibitem[BCR11]{QRSTb}
M.~Berta, M.~Christandl, and R.~Renner.
\newblock The quantum reverse shannon theorem based on one-shot information
  theory.
\newblock {\em Commun. Math. Phys.}, 306:579, 2011.

\bibitem[BDH{\etalchar{+}}09]{QRSTa}
C.~Bennett, I.~Devetak, A.~Harrow, P.~Shor, and A.~Winter.
\newblock Quantum reverse shannon theorem.
\newblock Available as arXiv.org e-Print 0912.5537, 2009.

\bibitem[DSV13]{V-private}
I.~Dinur, D.~Steurer, and T.~Vidick.
\newblock A parallel repetition theorem for entangled projection games.
\newblock Available as arXiv.org e-Print 1310.4113, 2013.

\bibitem[FvdG99]{FuchsvdG99}
C.~Fuchs and J.~van~de Graaf.
\newblock Cryptographic distinguishability measures for quantum-mechanical
  states.
\newblock {\em IEEE Transactions on Information Theory}, 45(4):1216--1227,
  1999.

\bibitem[HS]{HS-private}
A.~Harrow and P.~Shor.
\newblock personal communication.

\bibitem[HSW]{HSW-private}
U.~Haagerup, V.~Scholz, and R.~Werner.
\newblock personal communication.

\bibitem[LTW13]{LTW08}
D.~Leung, B.~Toner, and J.~Watrous.
\newblock Coherent state exchange in multi-prover quantum interactive proof
  systems.
\newblock {\em Chicago Journal of Theoretical Computer Science}, (11), August
  2013.

\bibitem[NC00]{NC00}
M.~A. Nielsen and I.~L. Chuang.
\newblock {\em Quantum computation and quantum information}.
\newblock Cambridge University Press, Cambridge, U.K., 2000.

\bibitem[Rus94]{Ruskai94}
M.~B. Ruskai.
\newblock Beyond strong subadditivity: improved bounds on the contraction of
  generalized relative entropy.
\newblock {\em Rev. Math. Phys.}, 6(5A):1147--1161, 1994.

\bibitem[vDH03]{DH02}
Wim van Dam and Patrick Hayden.
\newblock Universal entanglement transformations without communication.
\newblock {\em Phys. Rev. A}, 67:060302, 2003.

\bibitem[VJN00]{Vidal-Jonathan-Nielsen-00}
G.~Vidal, D.~Jonathan, and M.~Nielsen.
\newblock Approximate transformations and robust manipulation of bipartite pure
  state entanglement.
\newblock {\em Physical Review A}, 62:012304, 2000.

\end{thebibliography}


\vspace*{-1ex}

\newcommand{\etalchar}[1]{$^{#1}$}

\end{document}